\newtheorem{prop}{Theorem}
\title{Task and Energy Aware Node Placement in  Wirelessly Rechargeable WSNs}
	\author{Md Solimul Chowdhury \\mdsolimu@ualberta.ca\\ Department of Computing Science \\ University of Alberta, Edmonton, Canada.}
\begin{document}
	\maketitle
    \date{}
	%\chead{\textbf{Project Proposal for CMPUT 640}\\Task and Energy Aware Node Placement in  Wirelessly Rechargeable WSNs}
	
	\onehalfspacing
	\pagestyle{plain}
    \begin{abstract}
In this paper, we present a novel problem of optimal placement of sensor nodes in wirelessly rechargeable Wireless Sensor Networks (WSNs) wrt. a charging requirement constraint and a task utility requirement constraint. We call this problem Task and Energy Aware Node Placement (TENP) problem. We have devised an algorithm to solve the TENP problem. Our theoretical analysis shows that the devised algorithm is an incomplete but tractable method for solving TENP. We have performed empirical evaluation of the devised algorithm for three different version of TENP in various experimental settings. The experimental results reveal numbers of interesting insights on the relationship between sensor node placement, charge harvest and task utility.
    \end{abstract}
	\section{Introduction}
	%A Wireless Sensor Networks (WSN) is composed of simple devices that is equipped with sensing and communication capabilities. In addition to these two capabilities,  devices in a WSN are typically battery powered, thus a WSN does not require an infrastructure for supplying electrical power in its deployment environment. With these two positive characteristics, WSN has become a de-facto go-to technology for domains, where real-time monitoring in an isolated domain and reporting of any incident of interest is required.   With these successful application, reliability of WSN is serious concern. One reliability concern is device related- devices in a WSN may fail anytime or devices may run out of power unexpectedly. The other reliability concern is about the volatility of the environment- onset of wild file may destroy the whole or big part of the network. Thus, increasing the fault-tolerance of a WSN is an important research issue within the WSN research community. 
	
	A \textit{Wireless Sensor Networks (WSN)} consists of two types of devices- \textit{sensor nodes} and \textit{sink nodes}. In a WSN, sensor nodes are equipped with sensing capabilities, which report the sensed data to the sink nodes for further processing. With these operational features, over the last years, WSNs has become the central technology for application domains, where autonomous monitoring of environment and reporting of events of interest are required.  Patient monitoring, fire detection, wildlife monitoring \cite{applicationWSN} - are only a few examples of WSN applications.  However, energy-efficiency of  WSNs is still a serious concern. In a typical WSN, the sensor nodes are powered by small sized batteries of limited capacity, which can severely limit the lifetime of that WSN. This energy solution is not sufficient for many applications that require power supply for extended period of time. Additionally, replenishing or replacing the batteries of sensor nodes are often a difficult task, as most of these applications are deployed in remote and inaccessible environments. Some alternative solutions to the energy-efficiency issues of WSNs have been proposed in the literature, such as, energy-aware networking protocols \cite{energy-protocol-survey}, ambient energy harvesting (\cite{ambient1,ambient2}), Radio-Frequency based Energy Transfer (RFET) \cite{rfet} technology etc.
	
For the RFET based approach, \textit{Energy Transmitters (ETs)} transmit energy signal as Electromagnetic (EM) waves, which are received by the sensor nodes. One property of EM wave is that it decays over distance. Thus the placement of ETs and sensor nodes are crucial to the life-time and utilization of a WSN.

In a WSN, different groups of sensor nodes are responsible for accomplishing different tasks. The utilization of a sensor node wrt. a task depends on the distance between that sensor node and that task. Depending on the importance of the tasks, their utilization requirements for the sensor nodes can vary. In other words, the more important a task is, the more it needs to utilize its associated group of sensor nodes. Thus the distance between sensor nodes associated to a task, need to respect the utilization requirement set by that task.

In this paper, we study optimal placement of sensor nodes that takes both of the above scenarios into consideration, namely - minimization of 1) distance between ETs and sensor nodes and 2) distance between tasks and their associated sensor nodes, subject to the following two constraints: a) a minimum charging requirement for each of the sensor nodes must be satisfied and b) the sensor nodes must satisfy a minimum utility requirement set up by their associated tasks.  We name this problem as \textit{Task and Energy Aware Node Placement (TENP)}.

The rest of this paper is organized as follows: In the next section, we present a general overview of energy solutions of WSNs. In Section 3, we review some existing works about energy, transmission and task aware node placement in WSNs. We formulate the TENP problem in Section 4. In Section 5, we present the devised algorithms for solving the TENP problem. Theoretical analysis of the devised algorithms are presented in Section 6. We present our empirical evaluation of the devised algorithms in Section 7. In Section 8, we briefly summarize our work and give some future directions.

\section{General Overview of Energy Solutions of WSNs}
The research on energy issues of WSN can be divided into two groups. While, the first group perceives the energy-efficiency problem of WSNs as the problem of designing energy efficient network protocols for WSNs, the second group incorporates a distinct technology, named - Radio Frequency Energy Transfer (RFET). In this section, we briefly present a general overview of these two groups.
 \subsection{Designing Energy-Aware Network Protocols}
Designing energy efficient networking protocols can be categorized into two types: \textit{Clustering Based approach} and \textit{Tree Based approach} \cite{energy-protocol-survey}. In  the clustering based approach, a WSN is divided into  a set of clusters of sensor nodes. The clustering approach makes a WSN more energy efficient by a) preventing transfer of replicated message among the sensor nodes in a given cluster and b) establishing an energy-efficient route within a local cluster. In the tree-based approach, the whole network is organized as a tree, where the sink node is the root node of that tree and the other nodes (intermediate nodes and leaves) are sensor nodes. Given an intermediate node \textit{I}, the idea is to have \textit{I} (residing at a higher level than \textit{S}) aggregate the data sent by the nodes residing below $I$. As a result of aggregation, network traffic can be reduced, which in turn, can increase the energy-efficiency and life-time of a WSN.

Design and implementation of energy efficient routing protocol has increased the energy efficiency of WSNs. However, there are some applications, where sensor nodes need to be deployed without getting replaced or manually recharged. For example, the desired lifetime of sensor nodes for applications, such as environmental/wildlife monitoring, is on the scale of years. This type of application requirements, warrant alternative types of energy solutions for WSNs.

\subsection{RFET based WSNs}
A WSN based on Radio Frequency Energy Transfer (WRFET) is composed of sensor nodes, sink nodes and a special type of nodes called \textit{Energy Transmitter (ET)} nodes. In a WRFET, ET nodes are responsible for wirelessly recharging sensor nodes. The wireless transfer is performed by using the so called \textit{Magnetic Resonance Coupling (MRC)} technique. The MRC technique employs a single copper loop of small radius both at the energy transmitter end and sensor node’s receiving end \cite{notesOnEnergyHarvesting}.

The ETs in a given WRFET can be static or mobile. The locations of static ETs are fixed and the locations of mobile ETs are not fixed, as this type of ETs can change their locations and transmits energy to the sensor nodes. An ET is capable of charging sensor nodes within a fixed radius.
\subsubsection{Ambient Energy Harvesting and WRFET}
Harvesting ambient energy and then using the harvested energy to operate a WSN, is another relatively new development in the landscape of energy-efficiency research for WSN.  We briefly summarize this approach based on our reading of \cite{ambient1,ambient2}.

When energy is harvested from the ambient, the  nodes can have ample supply of energy. In non-RF-based Energy harvesting schemes, energy harvesting can be performed primarily in two ways: i) the ambient energy can be directly converted and used by the sensor nodes, ii) the sensor nodes store the harvested energy in battery storage and use it from those battery storages. In RF-based energy harvesting schemes, dedicated ETs are used, which has the following two capabilities a) harvesting energy from the ambient source and b) transferring the harvested energy wirelessly to the sensor nodes by using RF.  Solar energy, mechanical energy, thermal energy and commercial energy harvester are the main types of energy source of energy harvesting for WSNs.

Despite  being a game-changing solution to the energy issues of WSNs, ambient energy harvesting has a major drawback. Most of the energy harvesting technology are unpredictable and sometime uncontrollable. For example, on a cloudy day, the absence of sun can create severe energy crisis in a WSN that harvests energy from solar energy. This inherent nature of ambient energy, has driven the research community to develop alternative solutions to the energy issue of WSNs.

\subsubsection{Constant Power Source based WRFET}
Constant Power Source based WRFET  uses a constant power source for the ETs, instead of harvesting energy from the ambient. In this type of WRFET, the ETs can get recharged by getting connected to dedicated power outlets. For the static ETs, the power outlets resides in  the site of a WSN. For the mobile ETs, the power outlet can reside in the site of a WSN or nearby the site of a WSN. By using constant power sources, the ETs in a WSN can get uninterrupted and predictable power supply, which is not the case with ambient energy harvesting WSNs.

\section{Related Works}
A fundamental property of wireless charging is that the wireless charging signal decays with the increase of distance from the ETs. This decay property of wireless charging signal is due to a fundamental law of nature, namely - the decaying property of Electromagnetic Radiation. Depending on the power of an ET, in WSN, the range of wireless signal for charging may range from 1 meter to few hundred meters, which is not enough for many applications. As a result, optimal power management of the energy transmitted by the ETs is a crucial issue in a WRFET, where the optimal power management largely depends on the placement and number of ETs and sensor nodes.

Optimal power management of WSNs involves optimal average charging of the sensor nodes and fairness of charging, which help WSNs to  achieve network goals gracefully. As mentioned earlier, the electromagnetic signal transmitted from ETs weakens over distance, the amount of energy harvested by a sensor node depends on its position relative to the position of ETs. This may results in an unbalanced energy harvesting across the network. For instance, given a WRFET, an ET node $ET_0$, two sensor node $n_0$ and $n_1$, $n_0$ will harvest more energy than $n_1$, if $distance(n_0,ET_0) < distance(n_1,ET_0)$. Therefore, the right placement of ETs is really crucial to achieve balanced energy distribution and in turn, network goals.

 In \cite{transmissionRate}, the authors have studied the joint problem of scheduling, routing and balanced power management in a WRFET and provide a solution to this problem that maximizes the\textit{ minimum average transmission rate} of the sensed data in a WSN.  Given a WSN $W$, with the set of nodes $N$, the problem of joint routing, scheduling and power management of a WSN with RFET is formulated as an optimization problem. The optimization problem maximizes the minimum average net scheduling rate $\lambda_i$ of the sensor node $i \in N$, where $\lambda_i$ = $\sum_{j\in N} t_{ij} -  \sum_{j\in N} t_{ji}$, with $t_{ij}$ be an average scheduling rate of transmissions (i.e., the average number of transmissions in a time-slot) from node $i$ to node $j$. This optimization is subject to five constraints, where three are data transmission constraints and two are energy constraints. The data transmission constraints are the followings: a) the sum of average number of transmissions between any pair of nodes in $W$ at a time-slot must not exceed 1, b) for any node $i$, the average net scheduling rate must be $\lambda_i$. c) For any node $i$, at any time-slot, the average number of transmissions in a time-slot cannot be negative. The power constraint are the following: a) The power level (from ETs) received by any node $i$  must be at least equal to the power consumed by $i$ to transmit/receive data to/from any other node $j$. b) The power needed for transmitting data from $i$ must respect the Signal-to-Noise Ratio (SNR) threshold of the intended receiver.  
 % Shorten the experiments part
%For experimental evaluation, the paper uses two experimental settings. In the first experimental setting, it uses a fixed topology with one sink (0) and two sensor nodes (1,2). In this experiment, the distance ($d_{2,0}$) between sink 0 and the sensor node 2 is varied to evaluate transmission rates and residual power. With the increase of distance $d_{2,0}$, both \textit{transmission rate} and \textit{residual power}\footnote{Residual power is the difference between energy harvesting rate and energy consumption rate of a sensor node.}  of sensor node 2 decreases (Fig. 3 of [5]). In the second experimental setting, it considers a topology of 10 sensors and one sink, where the sensors are divided into two tiers, namely -\textit{ inner and outer tier} (Fig. 5 of [5]). For this settings, it evaluates the performance of three routing algorithms to see how their minimum average net scheduling rate varies with various values of radii of the inner tier. For all of these three algorithms, with the increase of radii of the inner tier, minimum average net scheduling rate decreases (Fig 6. of [5]).

In a WRFET, the mobile ETs move from one location to another location to transmit energy for the sensor nodes. The specific locations, where the ETs park, are called \textit{landmarks}. Selection of optimal landmarks play a crucial role to achieve network goals. One such network goal is to maximize profits associated with missions or tasks. In \cite{profit-maximization}, the authors have studied the problem of optimal selection of landmarks to  favour those sensors that participate in profit maximizing missions. In other words, this work solves the problem of finding the optimal number of landmarks to replenish the battery of those sensors that participate in profit maximizing missions.

Given a set of $n$ sensor nodes in a WRFET, a mission $j$ and a landmark location $(x,y)$, profit of mission $j$ is defined as: $$P_j=\sum_{i=1}^{n} \sum_{x} \sum_y z^{i}_{xy} *r_{ij} * (\sigma_{ij} / s_j) $$ where $z^{i}_{xy}=1$, if sensor $i$ receives power from a landmark at $(x,y)$, $r_{ij} =1$, if $i$ is participating in mission $j$, $\sigma_{ij}$ is the utility of sensor $i$ to mission $j$ and $s_j$ is the sensing demand of mission $j$. The utility of a sensor node $i$ is $\sigma_{ij}$ and it is defined as $1/d_{ij}$, where $d_{ij}$ is the distance distance between $i$ and $j$. Therefore, the closer a sensor is to a mission, the higher is its utility for that mission.

This optimization problem is named Mission Aware Placement of Wireless Power Transmitters (MAPIT) and is modelled as an Integer Linear Programming (ILP) problem. MAPIT jointly maximizes the followings: I) it maximizes the number of nodes receiving energy from a landmark and II) these sensors participate in profit maximizing missions.

For $m$ distinct missions in WRFET, the paper maximizes the following objective function $$maximize \;\;  \sum_{j=1}^{m} \sum_{i=1}^{n} \sum_{x} \sum_y \theta^{ij}_{xy} * (\sigma_{ij} / s_j) $$.
Where $\theta^i_{xy} = z^{i}_{xy} *r_{ij}$. $\theta^i_{xy}=1$, if $i$ is an participating sensor node in mission $j$ and receives power from ET located at landmark at $(x,y)$. Note that $\mathbf{\theta^i_{xy} = z^{i}_{xy} *r_{ij}}$, connects the node $i$ receiving energy from a landmark at $(x,y)$ with mission $j$ and$ \mathbf{(\sigma_{ij} / s_j)}$ incorporates the utility of node $i$ for mission $j$. As a result, by maximizing $\sum_{j=1}^{m} \sum_{i=1}^{n} \sum_{x} \sum_y \theta^{ij}_{xy} * (\sigma_{ij} / s_j)$, one can indeed maximize (I) and (II).

The above maximization problem is subject to the following constraints: a) Power supplied by an ET located at position $(x,y)$, must be at least equal to the power requirement of the sensor nodes that receive power from that ET (Equation 7 of \cite{profit-maximization}), b) A sensor node is associated with exactly one ET and no sensor node (Equation 8 of \cite{profit-maximization}), c) An ET located at $(x,y)$ is able to transfer power to at least one sensor node (Equation 9 and 10 of \cite{profit-maximization}), d) the number of landmarks cannot exceed a preset maximum number of landmarks (Equation 11 of \cite{profit-maximization}) e) A sensor node is associated with at most one mission (Equation 12 of \cite{profit-maximization}), f) A set of sensor nodes participating in a mission $j$, must fulfil the sensing requirement of $j$ (Equation 13 of \cite{profit-maximization}).

In \cite{fairness}, the authors have studied the optimal number and placement of static ETs for a given WRFET. To solve the optimal placement problem of ETs, the paper formulates an optimization problem wrt. a trade-off between average energy charged by the sensor nodes and fair distribution of energy among the sensor nodes. Another related optimization problem is formulated that finds the optimal number of ETs, wrt. the following constraint: minimum energy charged by each sensor nodes in a given WRFET must be at least equal to a threshold value. The work then empirically evaluates the RF energy transfer scheme, in terms of a) average energy charged by the sensor nodes and fairness of energy replenishment and b) optimal number of ETs.

To formulate the first optimization problem mentioned above, the paper considers a WRFET of $N_s$ number of sensor nodes and $N_E$ number of ETs, where $\{S_1,S_2, \dots, S_{N_S}\}$ and $\{ET_1,ET_2, \dots, ET_{N_E}\}$ are the set of sensor nodes and ETs respectively. For the sensor node $i$, $E_{C_i}$ is assumed to be the energy charged at node $i$ at any given time and $E_{C_i}$ is defined as follows
$$
E_{C_i} = \{(T_E - PL_T - P_{r,i}) * (T-\tau))\} * \mu
$$
where $T_E$ is the transmit power of $e^{th}$ ET, the$PL_T$ is the path loss in terrestrial environment, $P_{r,i}$ is the power consumed by the $i^{th}$ sensor node for power reception, $T$ is frame size (operation time+charging time), $\tau$ is the charging time within T (thus, $T-\tau$ is the charging time for a sensor node.) and $\mu$ is the charging efficiency of the sensor nodes (fixed for every node). The terrestrial path loss $PL_T$ is adopted from Very High Frequency (VHF) propagation literature and it is defined as follows:
$$
PL_T(d) = L_0(d_0) + 10 \; * w  * \; log(d/d0) + X_f
$$
where $d$ is the physical distance between transmitter and receiver, $d_0$ is the distance of a reference location from transmitter, $L(d_0)$ is the measured path loss (depends on frequency of the signal being transmitted), $w$ is the path loss exponent and $X_f$ is the Gaussian random contributor (represents the shadowing effect).

The utility  function $U$ for the $e^{th}$ ET at position $(x_e,y_e)$, that takes the average charging and fairness of energy distribution in consideration, is formulated as follow:
$$
U(x_e,y_e) = \alpha \; * \; \sum_{i=1}^{N_S} E_{C_i} + (1-\alpha)\; *\; min \; \{E_{C_i}\}
$$
Here, $0 \le \alpha \le 1$ is the trade-off factor. The \textit{first term} in the right-hand side of the above equation incorporates \textit{the total energy charged in the network} ( by the $e^{th}$ ET) and the \textit{second term} incorporates \textit{the minimum energy charged by a sensor node in the network} (by the $e^{th}$ ET). As a result, by maximizing this utility function, one can indeed maximize both average energy charging and  energy distribution fairness wrt. the trade-off factor $\alpha$. Hence, the paper solves the first optimization problem by solving the following optimization problem:

$$
{\{\bar{x_e},\bar{y_e}\}} =arg \; max_{\{x_e,ye\}}\; U(x_e,y_e)
$$
where $\{\bar{x_e},\bar{y_e}\}$ represents the optimal placement of $e^{th}$ ET.

The second optimization problem is about finding the optimal number of ETs wrt. a minimum charging requirement for any sensor node in a given WRFET. This problem is formulated as follows:

$$
\bar{N_E} = arg\;min\;\{N_E \vert E_{C_i} \ge \eta\}
$$
where, $\bar{N_E}$ is the optimal number of energy transmitter and $\eta$ is the minimum energy charging requirement for any sensor node. An acceptable error margin $\zeta$ is defined as follows: $\zeta = \eta - \{E_{C_i}\}$. While solving the equation for $\bar{N_E}$, a trade-off between acceptable error margin and required number of energy transmitters is considered.

%shorten the experiment.
%The above two optimization problems are implemented using MATLAB. As a testbed,  a 100m by 100m rectangular grid field is considered. Empirical evaluation are performed with the following four experimental setups: a) For three values of $\alpha$ (0,1 and 0.5), average energy charged is measured for varying values of $N_E$ (Fig. 3 of [8]). As expected, this experiment reveals the followings: i) Average energy charging is proportional to  the values of $\alpha$ and ii) with the increase of $N_E$ values, average charging increases. b) For three values of $\alpha$ (0,1 and 0.5), average energy charged is measured for varying values of $N_S$ (Fig. 4 of [8]). As the number of sensor nodes $N_S$ increases for a fixed value of $N_E$, average charging also decreases. c) The lower the values of $\alpha$, more fair is the energy distribution, as probability density function of sensors for energy are more concentrated for lower values of $\alpha$, than what it is for higher values of $\alpha$ (Fig. 5 of [8]). d)  The last experimental settings evaluates how optimal number of ETs changes with varying error margins. As expected, with the increase of error margin $\zeta$, optimal number of ETs ($\bar{N_E}$) decreases (Fig. 7 of [8]).

\section{Problem Formulation}
\label{problemFormulation}
\subsection{Task and Energy Aware Node Placement}
Figure 1 shows an example layout of a WSN, where a Task and Energy Aware Node Placement (TENP) problem can be formulated. In this example layout, we have 7 sensor nodes participating in  3 different tasks ($T_1,T_2$ and $T_3$) and receiving energy from 3 ETs. 

Given a WSN, we consider a set of sensor nodes $N=\{s_1, \dots s_m\}$, a set tasks $\Gamma=\{t_1 \dots t_n\}$, a set of utility requirements of the tasks $U=\{u_1 \dots u_n\}$ (where, $u_i \in U$ corresponds to the utility requirement of the task $t_i \in \Gamma$) and a set of ETs, $E=\{e_1 \dots e_o\}$. For each tasks $t_j$, we assume a set of sensors $N^{t_j} \subseteq N$, with $N^{t_j}=\{ s^{t_j} \vert s^{t_j} \;participates \; in \; task\; t_j\}$ .  We require that a sensor node is associated with exactly one task, i.e., $\forall_{{t_i}{t_j}}$, with $i \neq j$, $ N^{t_i}  \cap N^{t_j} =\emptyset$,  and  $\bigcup_{i=1}^{n} N^{t_i} = N$. We also assume $C_{i}^{k}$ be the charge received by the sensor node $s_i \in N$ from  energy transmitter $e_k \in E$ within a given time frame and $\lambda$ be the charging requirement of any sensor node $s_i \in N$.
\begin{figure}
   \centering
   \begin{minipage}{0.4\linewidth}
   		\label{1CPUTime}
       \centering
       \includegraphics[width=1\linewidth]{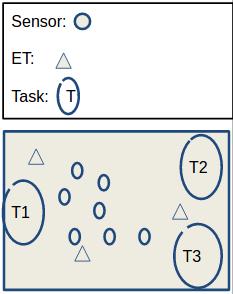} % first figure itself
       \caption{An environment for TENP}
   \end{minipage}
\end{figure}
In the literature (such as \cite{profit-maximization}), utilization of a sensor node wrt. a task is defined in terms of the distance between that task and that sensor. Precisely, for a task $t_j$, we define the utilization $\mu_{s^{t_j}_i}$ of a node $s^{t_j}_i \in N^{t_j}$, wrt. $t_j$, as follows:  $\mu_{s^{t_j}_i}=1/(distance(s^{t_j}_i,t_j))$ \footnote{In this paper, we assume Manhattan distance as our \textit{distance} function, where $distance(x,y)=\vert x - y \vert$.}. So, the closer (further) a sensor is placed from its associated task, the more (less) is its utility for that task.

The TENP problem is formulated as follows:
$$\!\!\!\!\!\!\!\!\!\!\!\!\!\!\!\!\!\!\!\!\!\!\!\!\!\!\!\!\!\!\!\!\!\!\!\!\!\!\!\!\!\!\!\!\!\!\!\!\!\!\!\!\!\!\!\!\!\!\!\!\!\!\!\!\!\!\!\!\!\!\!\!\!\!\!\!\!\!\!\!\!\!\!\!\!\!\!\!\!\!\!\!\!\!\!\!\!\!\!\mathbf{Minimize}$$
$$
  \{\sum_{j=1}^{n}\sum_{i=1}^{\vert N^{t_j}\vert} distance(s^{t_j}_{i},t_j) \} + \{\sum_{k=1}^{o}\sum_{i=1}^{\vert N^{t_j}\vert} distance(s^{t_j}_i,e_k) \} \;{\forall_{t_j \in \Gamma}} ,
$$
$$
\!\!\!\!\!\!\!\!\!\!\!\!\!\!\!\!\!\!\!\!\!\!\!\!\!\!\!\!\!\!\!\!\!\!\!\!\!\!\!\!\!\!\!\!\!\!\!\!\!\!\!\!\!\!\!\!\!\!\!\!\!\!\!\!\!\!\!\!\!\!\!\!\!\!\!\!\!\!\!\!\!\!\!\!\!\!\!\!\!\!\!\!\!\!\!\!\!\!\! \mathbf{Subject} \; \mathbf{to} \;
$$
$$
\!\!\!\!\!\!\!\!\!\!\!\!\!\!\!\!\!\!\!\!\!\!\!\!\!\!\!\!\!\!\!\!\!\!\!\!\!\!\!\!\!\!\!\!\!\!\!\!\!\!\!\!\!\!\!\!\{\sum_{k=1}^{o} C_{i}^{k} \ge \lambda \},  \;\;\; \forall_{s_i \in S}
$$
$$
\!\!\!\!\!\!\!\!\!\!\!\!\!\!\!\!\!\!\!\!\!\!\!\!\!\!\!\!\!\!\!\!\!\!	  \{\mu_{s^{t_j}_i} \ge u_j\}, \;\;\; \forall_{s^{t_j}_{i} \in N^{t_j}}, \forall_{t_j \in \Gamma}
$$
In this paper, we solve this optimization problem.

\section{Algorithms}
In this section, we devise an algorithmic framework to solve the TENP problem. First, we establish some required notations to present the algorithms, then we present the algorithms.
\subsection{Notations}
We consider a $n$ by $n$ square grid of cells as the sensor node deployment \textit{Environment}, where the location of a cell is identified by a Cartesian coordinate. We define the environment $\varepsilon$ as follows:
$$
\varepsilon=\{(x_1,y_1), \dots, (x_i,y_i) , \dots , (x_n,y_n)\}
$$, where $(x_i,y_i)$ is the location of the $i^{th}$ cell in $\varepsilon$. For each of the energy transmitters $e \in E$, a cell in $\varepsilon$ is reserved. We define $\varepsilon^E \subset \varepsilon$ to be the set of cells reserved for energy transmitters. Similarly, for each of tasks $t \in \Gamma$, a cell in $\varepsilon$ is reserved. We define $\varepsilon^\Gamma \subset \varepsilon$ to be the set of cells reserved for tasks. We also have  $\varepsilon^E  \cap \varepsilon^{\Gamma} = \emptyset$. The sensor nodes in $N$ can be placed in  $\varepsilon^N \subset \varepsilon$, with $\varepsilon^N = \varepsilon \setminus (\varepsilon^E  \cup \varepsilon^\Gamma)$ and $\vert \varepsilon^N \vert \geq \vert N \vert$.

For each of the tasks $t_i \in \Gamma$, we have a set of sensor nodes $N^{t_i}$, which are associated with task $t_i$. Thus, we map the task/sensor association in
$N^{\Gamma}=\{(t_1, N^{t_1}),\dots , (t_i, N^{t_i}), \dots (t_n, N^{t_n})\}$. Additionally, we denote the task $t$ is associated with the sensor $s$ as $t^s$.

\subsection{Distance Minimzation}

Algorithm 1 shows the algorithm of the procedure \textit{distanceMinimization}, which solves a given TENP problem. The idea of the \textit{distanceMinimization} procedure is as follows: (a) assign a sensor $s$ a location $(x,y)$ from the available locations in $\varepsilon^N$, for which the charging requirement constraint of $s$ and utility constraint of task $t^s$ can be satisfied and the location $(x,y)$ is distance minimized wrt. $t^s$ and all the ETs $e \in E$. Perform (a) repeatedly for all the sensors $s \in N$. If every sensor $s \in N$ gets a position, then the problem is satisfiable, otherwise it is not.

The main loop of Algorithm 1 iterates through all the tasks $t_i \in \Gamma$ (line 1). Then, for each sensor  $s^{t_i} \in N^{t_i}$ (line 2), for each available location $(x,y)$ (line 5),  it computes the distance $taskDistance$ of $(x,y)$ from task location $(x_{t_i}, y_{t_i})$ (line 6). It also computes the combined distance $etDistance$ of  $(x,y)$ from all the ET locations $(x_e,y_e) \in \varepsilon^{E}$ (line 7-10). Then for the location $(x,y)$, it computes the \textit{combinedDistance} by summing up $taskDistance$ and $etDistance$ (line 11). Once, combined distance for all the available locations $(x,y)$ is computed, it sorts the potential locations for $s^{t_i}$ by increasing order of $combinedDistance$ (line 13) and place the sorted results in $sortedIndexedDistance$. At this point, for each locations $(x',y,) \in sortedIndexedDistance$, Algorithm 1 checks if both of the utility constraints and charging constraints can be satisfied (line 16-17). If both of these constraint can be satisfied for the location $(x',y')$, then it means that the sensor $s^{t_i}$ can be placed in $(x',y')$, in which case, $s^{t_i}$ is placed in $(x',y')$ (line 19) and the location $(x',y')$ is marked as unavailable (line 20). If none of the available locations $(x',y') \in sortedIndexedDistance$ can satisfy both of the constraints, then it means that  $s^{t_i}$ cannot have a place\footnote{If a sensor cannot be placed in a position (a failure), then the given TENP problem is unsatisfiable. At this stage, one approach would be to terminate execution of Algorithm 1 by reporting the unsatisfiability of the given problem. But, in our algorithm, we choose to continue despite of this failure. This is because of our intention of getting solution (partial), even for the unsatisfiable problems. (line 28 of Algorithm 1)}.

At the end, the given problem is satisfiable, if every sensor node  $s \in N$ gets a position (line 25) and it is unsatisfiable, otherwise (line 28).
\begin{algorithm}

		\label{aluScore}
			\caption{distanceMinimization}
			\begin{algorithmic}[1]
			   \scriptsize
				\FORALL{$t_i \in \Gamma$}
					\FORALL{$s^{t_i} \in N^{t_i}$}
						\STATE $locIndex = 0$
% * <mdsolimu@ualberta.ca> 2018-05-19T20:55:58.246Z:
%
% ^.
% * <mdsolimu@ualberta.ca> 2018-05-19T20:55:57.131Z:
%
% > locIndex
%
% ^.
						\STATE {$(x_{t_i},y_{t_i}) \leftarrow loc(T,t_i)$}
						\FORALL{$(x,y) \in \varepsilon^{N} \setminus \{\infty\}$}
							\STATE {$taskDistance(locIndex) \leftarrow \vert (x-x_{t_i})\vert + \vert (y-y_{t_i})\vert$}
							\STATE {$etDistance(locIndex) \leftarrow 0$}
							\FORALL{$(x_e,y_e) \in \varepsilon^{E}$}
								\STATE {$etDistance(locIndex) \;\;\;\;\; \leftarrow  etDistance(locIndex) +\vert (x-x_{e})\vert + \vert (y-y_{e})\vert$}
							\ENDFOR
							\STATE {$combinedDistance(locIndex) \leftarrow taskDistance(locIndex) + etDistance(locIndex)$ }
						\ENDFOR
						\STATE {$sortedIndexedDistance \leftarrow sort(combinedDistance)$ }
					\ENDFOR
					\FORALL{$\{d,(x',y')\} \in sortedIndexedDistance$}
						\STATE {$satisfy \leftarrow checkUitlityConstraints((x',y'),(x_{t_i},y_{t_i}))$}
						\STATE {$satisfy \leftarrow checkChargingConstraints((x',y'), \{(x'_e,y'_e) \vert (x'_e,y'_e) \in loc(E,e) \;\;\; and \;\;\; \forall_{e \in E} \})$}
						\IF {$satisfy$}
							\STATE $position(s^{t_i}) \leftarrow (x',y')$		
							\STATE $\varepsilon^{N}(x',y') \leftarrow \infty$					
							\STATE break
						\ENDIF
					\ENDFOR
				\ENDFOR
				\IF {$\vert position \vert = \vert N \vert$}
					\STATE Return \{$position$, SATISFIABLE\}						
				\ELSE
					\STATE Return \{$position$, UNSATISFIABLE\}
				\ENDIF
			\end{algorithmic}
\end{algorithm}

\subsection{Constraint Satisfaction: Utility and Charging Requirement}
Line 16 and 17 of Algorithm 1 calls two procedures, namely - \textit{checkUtilityConstraints} and \textit{checkChargingConstraints} respectively. Given a location $(x,y)$ and a sensor $s$, \textit{checkUtilityConstraints} checks if the utility requirement set by $t^s$ can be satisfied by the location $(x,y)$ and \textit{checkChargingConstraints} checks if charge received at $(x,y)$ satisfies the charging requirement $\lambda$. In this section, we present the algorithms for these two procedures.

\paragraph{Utility Constraint}The algorithm for the \textit{checkUitlityConstraints} is shown in Algorithm 2. Given a potential location  $(x,y)$ for a sensor $s$ and the location $(x_{t_i},y_{t_i})$ of it's task $t_i$, it simply checks if the utility associated for the location $(x,y)$ is at least equal to the utility requirement of the task $t_i$ (line 2). 
\begin{algorithm}[H]
		\label{aluScore}
			\caption{$checkUtilityConstraints((x,y),(x_{t_i},y_{t_i}))$}
			\begin{algorithmic}[1]
			   \scriptsize
   				\STATE $sTDistance \leftarrow \vert (x-x_{t_i}) \vert + \vert (y-y_{t_i}) \vert $ 
   				\STATE Return  $u_i \le (1/sTDistance)$
			\end{algorithmic}
\end{algorithm}

\paragraph{Charging Requirement Constraint}
Before presenting the algorithm for the procedure $checkChargingConstraints$, we construct a charge receiving model for the sensor nodes.

During a time frame $T$, the amount of charge received by a sensor node from an ET depends on many factors, such as its distance from that ET, obstacles between the sensor node and the ET, energy emitted by the ET, charging efficiency of the sensor node circuit etc. Here, we adopt the charging model described in \cite{fairness}.

For the sensor node $i \in N$, $C^{e}_{i}$ is the energy charged at node $i$ with charge transmitted from ET $e$ within the time frame $T$ and $C^{e}_{i}$ is defined as follows
$$
C^{e}_{i} = \{(T_E - PL_T - P_{r,i}) * (T-\tau))\} * \mu
$$
where $T_E$ is the transmit power of $e^{th}$ ET, $PL_T$ is the path loss in terrestrial environment, $P_{r,i}$ is the power consumed by the $i^{th}$ sensor node for power reception, $T$ is the frame size (operation time+charging time), $\tau$ is the charging time within T (thus, $T-\tau$ is the charging time for a sensor node.) and $\mu$ is the charging efficiency of the sensor nodes (fixed for every node).
The terrestrial path loss $PL_T$ has various models, such as, \textit{log-distance path loss}, \textit{log-normal shadowing} \cite{path-loss}. Here we is consider the \textit{log-distance path loss} model. In this model, the path loss $PL_T$  is defined as follows:
$$
PL_T(d) = PL_0(d_0) + 10 \; * w  * \; log(d/d0)
$$
where $d$ is the physical distance between transmitter and receiver, $d_0$ is the distance (value can be up to 100m, depending on the application) of a reference location from the transmitter, $PL(d_0)$ is free-space path loss (depends on frequency of the signal being transmitted), $w$ (value varies from 2 to 6) is the path loss rate.

The free-space path loss $PL_0(d_0)$ wrt. the distance $d_0$ of a reference location is defined \footnote{Taken from https://en.wikipedia.org/wiki/Free-space\_path\_loss} as follows:
$$
	PL_0(d_0) = ({\dfrac{4\pi d_{0}f}{c}})^2 = 20 \; log_{10} (d_0) + 20 \; log_{10} (f) + 92.5
$$, where $c$ is the speed of light and $f$ is the frequency of the transmitted signal by an ET.
%In a urban area, often tasks and their associated sensors have obstacles within themselves, which may incur some path loss for the transmitted signal. In Very High Frequency (VHF) literature, this path

In Algorithm 3, we present the procedure $checkChargingConstraints$. Given a potential location $(x,y)$ for a sensor and positions of energy transmitters $etPositions$, it computes (line 2-7) the charge received at $(x,y)$ from each of the ET $e$, located at $(x_e,y_e)$. In line 5, it incrementally sums up charged received at $(x,y)$ from each ET location $(x_e,y_e)$. Then it simply checks if $C_{(x,y)}$, the total charged received at location $(x,y)$ is at least equal to the charging requirement $\lambda$ (line 8).
\begin{algorithm}[H]
		\label{aluScore}
			\caption{$checkChargingConstraints((x,y),etPositions)$}
			\begin{algorithmic}[1]
			   \scriptsize
   				\STATE {$C_{(x,y)} = 0$}
 				\FORALL{$(x_e,y_e) \in etPositions$}
 					  \STATE $d \leftarrow \vert (x-x_{e}) \vert + \vert (y-y_{e}) \vert $ 
       				  \STATE $PL_0(d_0) \leftarrow 20 * log(d_0) + 20 * log(f) + 92.5$;
       				  \STATE $PL(d) \leftarrow PL_0(d_0) + 10 * w * log(d/d_0)$;
					  \STATE $C_{(x,y)} \leftarrow C_{(x,y)} + \{(T_E - PL_T - P_{r,i}) * (T-\tau))\} * \mu $
 				\ENDFOR
				\STATE Return $ \lambda \le C_{(x,y)}$  
			\end{algorithmic}
\end{algorithm}

\subsection{Simulation}
In Algorithm 1, for satisfiable problems, assigned positions of all the sensor nodes are returned in $position$ (line 26). With locations of all the sensor nodes being fixed in $position$, simulation can be performed by placing sensor nodes in locations as specified in $position$. In our simulation algorithm, we assume the following:
\begin{itemize}
\item During a \textit{time frame} $T$, a sensor node performs two operations: (a) Network operation (sensing and data transmission)  and (b) Charging operation, where (a) is performed for $\tau$ time  and (b) is performed for ($T-\tau$) time.
\item The total simulation time is a multiple of the given time frame size. 
\item Charging of all the sensors occurs at the beginning of a time frame for a fixed fraction of the the time frame size.
\end{itemize}

The pseudo-code for the $simulation$ procedure is shown in Algorithm 4. Starting at time 0, the $simulation$ runs until $simulationTime$ is reached (line 2-17). At beginning of each time frame (line 4\footnote{If $reminder(timeFrac,1) == 0$, then current time index $i$ is the beginning of a new time frame, where $timeFrac \leftarrow i/T$.}), all the sensor nodes receives charge for ($T-\tau$) amount of time (line 6-13). For the rest of the time $\tau$, network operations are performed (line 15).
When simulation ends at $simulationTime$, we compute the following two evaluation metrics:
\begin{itemize}
\item \textbf{Average Harvested Charge}: The average amount of charge that the sensor nodes harvest from the energy transmitters within a time frame. Thus, average harvested charge of a given WSN is $\frac{\sum_{i=1}^{\vert N \vert}\sum_{k=1}^{\vert E \vert} C_{i}^{k}}{\vert N \vert * T}$ (Computed in line 19-23).
\item \textbf{Average Task Utility}: The average amount of utility yield by tasks from their associated sensors within the simulation life-time . Thus, the average task utility of a given WSN is $\frac{\sum_{j=1}^{\vert \Gamma \vert}\sum_{i=1}^{\vert N^{t_j} \vert} \mu_{s^{t_j}_i}}{\vert \Gamma \vert}$ (Computed in line 23).
\end{itemize}.

\begin{algorithm}[H]
		\label{aluScore}
			\caption{$simulation(position,simulationTime,T)$}
			\begin{algorithmic}[1]
			   \scriptsize
   				\STATE $i \leftarrow 0$
   				\WHILE {$i < simulationTime$}
   					 \STATE $timeFrac \leftarrow i/T$
   				     \IF{$reminder(timeFrac,1) == 0$}
						\STATE $i'\leftarrow 0$  
						\WHILE {$i' <= (T-\tau) $}
							\STATE $s=0$	
							\FORALL{$(x,y) \in position$}
 					  			\STATE $recievedCharge(s) \leftarrow computeRecievedCharge(x,y)$
 					  			\STATE $s \leftarrow s+1$
 							\ENDFOR	
 							\STATE $i' \leftarrow i'+1$					
						\ENDWHILE
   				     \ENDIF
   				     \STATE $performNetworkOperation()$			    
   				     \STATE $i \leftarrow i+1$					
   				\ENDWHILE
   				\STATE $j \leftarrow 0$					
   				\FORALL{$t \in T$}
 					  	\STATE $utility(j) \leftarrow computeUtility(t,N^t)$
 					  	\STATE $j \leftarrow j+1$
 						\ENDFOR	
 				\STATE Return \{$averageCharge \leftarrow mean(recievedCharge)/T$, $averageUtility \leftarrow mean(utility)$\}
			\end{algorithmic}
\end{algorithm}

\section{Theoretical Analysis} 
In this section, we present theoretical analysis of our main algorithm, namely - the \textit{distanceMinimization} (Algorithm 1) algorithm. First, we analyze \textit{time complexity} of Algorithm 1. Then we present our discussion on the \textit{completeness} of Algorithm 1 w.r.t the TENP problem.

\subsection{Time-complexity of Algorithm 1}
\begin{prop}
The worst case time-complexity of Algorithm 1 is $O(\vert N \vert^{2}* (\vert E \vert +  log (\vert N \vert) + \frac{\vert E \vert}{\vert N \vert}))$.
\end{prop}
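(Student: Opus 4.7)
The plan is to bound the running time of Algorithm 1 by walking through its nested loop structure and identifying three distinct cost contributions per sensor placement: combined-distance computation, sorting, and constraint checking. The key preliminary observation is that although the outer two loops (lines 1--2) iterate over tasks and then over their associated sensor sets, the collection $\{N^{t_i}\}$ partitions $N$ by the problem formulation in Section \ref{problemFormulation}, so the total number of sensor-placement iterations contributed by these two loops is exactly $|N|$ rather than $|\Gamma| \cdot \max_i |N^{t_i}|$.

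First I would analyze the per-sensor cost. The block of lines 3--13 iterates over every available cell in $\varepsilon^N$; per cell the task distance costs $O(1)$ (line 6) while the combined ET distance costs $O(|E|)$ (lines 7--10), yielding $O(|\varepsilon^N| \cdot |E|)$ for this phase. Line 13 then sorts the distance array in $O(|\varepsilon^N| \log |\varepsilon^N|)$ using any standard comparison sort. Finally, the constraint-checking loop (lines 15--24) scans the sorted locations, breaking on the first success; by Algorithm 2 the utility check is $O(1)$, and by Algorithm 3 the charging check is $O(|E|)$ since it accumulates path-loss contributions from every ET. I would charge a single $O(|E|)$ evaluation per successful placement to this phase and fold any additional worst-case rescans into the $O(|\varepsilon^N| \cdot |E|)$ of the first phase, to avoid double counting.

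Next I would multiply by $|N|$ and invoke the sizing assumption $|\varepsilon^N| = O(|N|)$, which follows from the requirement $|\varepsilon^N| \geq |N|$ together with a grid chosen to accommodate the sensor set up to a constant slack. This gives $O(|N| \cdot |\varepsilon^N| \cdot |E|)$ for distance computation, $O(|N| \cdot |\varepsilon^N| \log |\varepsilon^N|)$ for sorting, and $O(|N| \cdot |E|)$ for the charged successful-placement check; after substitution these collapse to $O(|N|^2 |E|) + O(|N|^2 \log |N|) + O(|N| \cdot |E|)$. Factoring out $|N|^2$ gives exactly $O(|N|^2 (|E| + \log |N| + |E|/|N|))$, matching the claim. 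The concluding block on lines 25--30 costs only $O(|N|)$ and is absorbed.

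The main obstacle, as I see it, is cleanly justifying the $|E|/|N|$ summand, since it is asymptotically dominated by the $|E|$ term already inside the parenthesis and is therefore more a bookkeeping artifact than an essential contribution. The risk is double-counting the cost of the constraint-checking loop, because one can either bound that loop by $O(|\varepsilon^N| \cdot |E|)$ (which is already covered by the distance-computation phase) or isolate a single $O(|E|)$ evaluation per sensor as its minimum useful work. The cleanest route, which I would take, is the latter: explicitly attribute one charging check per placed sensor to the $|E|/|N|$ term and leave the worst-case rescan inside the first term, so that the three summands correspond one-to-one to the three algorithmic phases.
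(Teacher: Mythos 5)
Your proposal is correct and follows essentially the same route as the paper: exploit the partition $\bigcup_i N^{t_i}=N$ to get $|N|$ outer iterations, bound the available locations by $O(|N|)$, and sum the three per-sensor contributions (distance computation $O(|N|\,|E|)$, sorting, and the charging-check term that produces the residual $|E|/|N|$ summand). Your sorting account ($|N|$ sorts at $O(|N|\log|N|)$ each) and your explicit handling of the constraint-check rescans are in fact cleaner than the paper's own bookkeeping, but they land on the identical bound, so there is nothing further to change.
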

\begin{proof}
%For each of the tasks $t_i \in T$ and for each of the sensors $s^{t_i} \in N^{t_i}$, Algorithm 1 checks the utility and charging requirement constraints for all the available locations $(x,y) \in \varepsilon^{N} \setminus \infty$. 
We develop the time-complexity of Algorithm 1 by dividing its execution into the following divisions:
\begin{itemize}
\item [(a)] The outermost loop of Algorithm 1 (line 1) runs $\vert \Gamma \vert$ number of times and the second outermost loop of Algorithm 1 (line 2) runs for $\vert N^{t_i} \vert$ for a task $t_i \in \Gamma$.  So, taken together, the first and second outermost loops run for total $\sum_{i=1}^{\vert \Gamma \vert}  N^{t_i}$ number of times. As our problem formulation assumes that a sensor node is associated with exactly one task (i.e., $\forall_{{t_i}{t_j}}$, with $i \neq j$, $ N^{t_i}  \cap N^{t_j} =\phi$,  and  $\bigcup_{i=1}^{n} N^{t_i} = N$), we have $\vert N \vert = \sum_{i=1}^{n}\vert  N^{t_i} \vert $. As a result, taken together, the two outermost  loops run for $\vert N \vert$ number of times. 

For a given TENP problem,\textit{ worst case} occurs when no sensors can be placed in any of the cells of $\varepsilon^N$. In this case, no position in $\varepsilon^N$ is marked as occupied (i.e., marked by $\infty$) at any stage of the execution. And as a result, for each of the  $\vert N \vert$ iterations (of the two outermost loops, taken together), we have $\vert \varepsilon^{N} -\{\infty\} \vert = \vert N \vert$. Then, in each of these $\vert N \vert $ iterations (of the two outermost loops, taken together), in worst case, we have $\vert N \vert $ iterations of division (1) and (2). That is, in worst case, division (1) and(2) runs for $\vert N \vert^{2}$ times.
\begin{itemize}

\item [(1)] For each execution of the loop at line 5, the distance from each of energy transmitters $e \in E$ is computed (line 8). Thus, in each of $\vert N \vert^2$ invocations, this loop runs exactly $\vert E \vert$ steps .
\item [(2)]  The call to the sort function is not a constant time operation. Here, we assume the best sorting algorithm for which the worst case run-time is \textit{log} of the size of the input. Thus in worst case, the complexity of the sort function is  $log (\vert N \vert)$. In the worst case, the sort function sorts $|N|$ elements in each of the $\vert N \vert^2$ invocations. 

\end{itemize}
\item [(b)]  Algorithm 1 checks the utility and charging requirement constraints for all the available locations in $sortedIndexedLocations$ (has the same size of  $\varepsilon^{N} -\{\infty\}$).  So, these two checks are performed at most $\vert N \vert$ times in the worst case.
\begin{itemize}
\item [(1)]  Algorithm 1 makes a call to the procedure $checkChargingConstraints$ for each sorted locations, which executes a loop over all $e \in E$. So, for this loop the algorithm incurs another $\vert E \vert$ executions in each invocations .
\end{itemize}
\end{itemize}

We have the following cost of execution per divisions: For (a-1), it incurs cost of $O(\vert E \vert)$. For (a-2), it incurs cost of $O(log (\vert N \vert)$.  $O(\vert N \vert)$ and $O(\vert N \vert * \vert E \vert)$ is the execution cost for (b) and (b-1) respectively. As mentioned earlier,  (1) and (2) executes $\vert N \vert^2$ number of times and (b) and (b -1) runs for $\vert  N \vert$ and $\vert  N \vert * \vert  E \vert$ number of times respectively. Thus the worst case running time for Algorithm 1 is
$$O(\vert N \vert^2 *(\vert E \vert +  log (\vert N \vert)) + $$
$$O(\vert N \vert) + $$
$$O(\vert N \vert * \vert E \vert)$$
$$
=O(\vert N \vert^{2} * \vert E \vert + \vert N \vert^{2} * log (\vert N \vert) + \vert N \vert * \vert E \vert )
$$
$$
=  O(\vert N \vert^{2}* \;\;(\vert E \vert +  log (\vert N \vert) + \frac{\vert E \vert}{\vert N \vert}))
$$
\end{proof}
\subsection{Completeness of Algorithm 1}
\begin{prop}
Algorithm 1 is an incomplete but tractable method for solving the TENP problem.
\end{prop}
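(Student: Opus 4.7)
The plan is to decompose the claim into its two parts: \textbf{tractability} and \textbf{incompleteness}, and to handle each independently.

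For tractability, the work is already done by the preceding theorem. I would simply observe that Theorem 1 establishes a worst-case running time of $O(\vert N \vert^{2}\cdot(\vert E \vert + \log(\vert N \vert) + \vert E\vert/\vert N\vert))$, which is polynomial in the problem parameters $\vert N \vert$ and $\vert E \vert$. Since polynomial running time is the standard notion of tractability, this half of the claim follows immediately. I would phrase it as a one-line appeal to Theorem 1 rather than repeating its derivation.

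For incompleteness, the strategy is to exhibit a counterexample, i.e., a concrete TENP instance that admits a satisfying placement but for which Algorithm 1 nevertheless returns \textsc{UNSATISFIABLE}. The root cause to exploit is the greedy, non-backtracking choice at line 19--20: once a sensor $s^{t_i}$ is placed at a cell $(x',y')$, that cell is marked $\infty$ permanently, and no later failure to place some sensor can undo this commitment. So I would construct a small environment $\varepsilon$ in which some cell $c^\star$ is the unique location satisfying the charging and utility constraints for some ``later'' sensor $s_j$ (associated with task $t_j$), while $c^\star$ also happens to be the distance-minimizing feasible cell for an ``earlier'' sensor $s_i$ (associated with $t_i$, processed first by the outer loop over $\Gamma$). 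Algorithm 1 will greedily assign $c^\star$ to $s_i$; when it later reaches $s_j$, every location in $sortedIndexedDistance$ will fail either the utility test (Algorithm 2) or the charging test (Algorithm 3), and $s_j$ will receive no position. On the other hand, a valid global placement exists: place $s_i$ at its \emph{second}-best feasible cell (which by construction still satisfies both constraints for $s_i$) and place $s_j$ at $c^\star$. This demonstrates that Algorithm 1 fails to find an existing solution, which by definition means it is incomplete.

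The main obstacle is making the counterexample precise enough to be convincing without introducing too much machinery. I would choose the parameters $T_E$, $P_{r,i}$, $T$, $\tau$, $\mu$, $w$, $d_0$, $f$, $\lambda$, and the task utility thresholds $u_j$ so that the charging and utility constraints carve out exactly the feasibility pattern needed: $s_j$ must land on $c^\star$, and $s_i$ can land on either $c^\star$ or some other cell $c'$ that is strictly farther from $t_i$ and the ETs (so that $c^\star$ wins the sort at line 13). A $2\times 2$ or $3\times 3$ grid with one ET and two tasks suffices; I would not grind through the path-loss arithmetic but would state the constraint values symbolically and verify feasibility of the alternative placement while exhibiting the failure of the greedy one. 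The proof would then conclude that Algorithm 1 is incomplete, and combined with the tractability observation, the full statement follows.
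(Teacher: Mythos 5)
Your decomposition (tractability from Theorem~1, incompleteness from the greedy, non-backtracking commitment at lines 19--20) is correct, and the tractability half matches the paper exactly. For incompleteness, however, you take a genuinely different route: the paper argues abstractly, without any concrete instance, that a later placement failure for some sensor $s^{t_k}$ \emph{may} be caused by an earlier irrevocable assignment of $s^{t_i}$, and since Algorithm~1 never reconsiders earlier assignments it cannot decide all satisfiable instances; you instead propose to exhibit an explicit counterexample --- a small grid with a cell $c^\star$ that is the unique feasible location for a later-processed sensor but is greedily consumed by an earlier one whose second-best feasible cell would have sufficed. Your approach, if carried through, is the stronger one: it produces an actual witness of incompleteness, whereas the paper's argument only establishes that such blocking is possible in principle and leaves the existence of a bad instance implicit. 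The price is that your proof is not finished until the instance is fully pinned down: the charging constraint (identical threshold $\lambda$ for every sensor, feasibility determined by the path-loss model) and the per-task utility thresholds $u_j$ must be verified numerically or symbolically to realize exactly the feasibility pattern you describe, and the processing order of tasks in the outer loop must place $s_i$'s task before $s_j$'s. You acknowledge this and the construction is clearly achievable (e.g., set $u_j$ large enough that $s_j$ must sit at distance $1$ from $t_j$, with only $c^\star$ both that close and charge-feasible), so the plan is sound; just be aware the paper's own proof skips this constructive step entirely and settles for the qualitative argument you also identified as the root cause.
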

\begin{proof}%Given a TENP problem, in Algorithm 1, a sensor node $s$ is assigned a position in $\varepsilon$, if both of the utility and charging constraint can be satisfied, otherwise no position is assigned to $s$. At the end of the execution of the outermost loop of Algorithm 1, if all sensors are assigned a position, then the given problem is satisfiable, otherwise it is not satisfiable.

Let $s^{t_i} \in N^{t_i}$ was assigned to a position $(x,y)$ at $v^{th}$ execution of the outermost loop (line 1) of of Algorithm 1 (successful case). Suppose, at the $v^{' th}$ (where $v^{'} > v$) execution of the outermost loop (line 1) of Algorithm 1, position assignment of $s^{t_k} \in N^{t_k}$ fails. This failure is due to the failure to satisfy either of (1) utility requirement of the task $t_k$ or (2) the charging requirement of the sensor $s^{t_k}$. However, the previous assignment of $s^{t_i}$ at position $(x,y)$, can be a reason of the failure of (1) or (2). Because, an alternative location $(x',y')$ may exist for $s^{t_i}$, assignment of which to $s^{t_i}$ could potentially avoid the failure of the position assignment of $s^{t_k}$. In other words, changing a previous sensor node assignment, may create a position (where both of the constraints are satisfied) for the currently failed sensor node. Algorithm 1, does not consider this possibility.

As a consequence, in general, Algorithm 1 is not able to determine the satisfiability of all the \textit{satisfiable} instances. In other words, Algorithm 1 is an incomplete method for solving the TENP problem. 

From Theorem 1, it is clear that it runs in polynomial time in the worst case. Putting together, Algorithm 1 is an incomplete but tractable method for solving the TENP problem.
\end{proof}

%\footnote{decibel-milliwatts, given a value in dBm, that value can be converted into watts by using the following equation: $P(W) = 1W ⋅ 10^{(P_(dBm) / 10)} / 1000.$} 

\section{Empirical Evaluation}
We have empirically evaluated the TENP problem by implementing the algorithms described in Section 5 and then by performing some experiments with the implementation. For our implementation, we have used MATLAB. Our implementation code-base is spread across 4 scripts and approximately has 200 lines of code.
\subsection{Test Bed for Experiment}
We set up our test bed for experimentation by setting fixed values for two types of parameters: (i) Energy, time frame and charging parameters and (ii) Environment parameters. Table 1 shows the values for type (i) parameters and Table 2 shows values for type (ii) parameters. Most of the parameter values in Table 1 are adopted from \cite{fairness} and parameter values of Table 2 are arbitrarily chosen to produce a WSN that closely mimic a realistic WSN.

\begin{table}[H]
\centering
\makebox[0pt][c]{\parbox{0.5\textwidth}{%
   \begin{minipage}[b]{\hsize}\centering
   \begin{tabular}{|l|l|l|}
\hline
\textbf{Parameter}                                                                                & \textbf{Symbol} & \textbf{Value}                \\ \hline
Transmission Power of ET E                                                                        & \textit{T\_E}   & 50 dBm (100 Watt)  \\ \hline
Frequncy                                                                                          & \textit{f}      & 2 GHz \\ \hline
Path loss rate                                                                                    & \textit{w}      & 2                           \\ \hline
Reference Location Distance                                                                       & $d_0$            & 5m                      \\ \hline
Charging Circuit Efficiency                                                                       & $\mu$             & 0.5                    \\ \hline
Time Frame Size                                                                                   & T               & 10 S                          \\ \hline
\begin{tabular}[c]{@{}l@{}}Operation Time for Sensor \\ Node\end{tabular}                         & $\tau$            & 9.5 S                           \\ \hline
\begin{tabular}[c]{@{}l@{}}Power Consumption of Sensor \\ Nodes for Charge Reception\end{tabular} & $p_{r,i}$      & 30 dBm (1 Watt)                        \\ \hline
\end{tabular}
       \caption{Energy, Time Frame and Charging Parameters}
       \label{tab:singlebest}
   \end{minipage}
   \hfill
   \begin{minipage}[b]{1\hsize}\centering
       \begin{tabular}{|l|l|l|}
\hline
\textbf{Input Parameter}  & \textbf{Symbol}                      & \textbf{Value}                     \\ \hline
Environment Size          & \textit{$ \vert \varepsilon \vert $} & 10 * 10                            \\ \hline
Number of Tasks           & \textit{$ \vert T \vert $}           & 4                                  \\ \hline
Number of ETs             & \textit{$ \vert E \vert $}           & 4                                  \\ \hline
Number of Sensors         & $ \vert N \vert $                    & 26                                 \\ \hline
Tasks Position            & $\varepsilon^T$                      & {[}(1,10),( 10,3),(10,1),( 1,1){]} \\ \hline
ET position               & $\varepsilon^E$                      & {[}(9,1),( 8,5),( 6,8),( 6,5){]}   \\ \hline
Tasks-Sensors Association & $N^{t_1}$                            & (2, 3, 4, 8, 10, 15, 16, 17)       \\ \hline
       & $N^{t_2}$                            & (5, 7, 11, 18, 19, 20, 21, 22)     \\ \cline{2-3}
                         & $N^{t_3}$                            & (6, 9, 12, 13, 14, 23, 24)         \\ \cline{2-3}
                         & $N^{t_4}$                            & (25, 26, 27)                       \\ \hline
\end{tabular}
       \caption{Environment Parameters}

   \end{minipage}
}}
\end{table}

\subsection{Experimental Results}
In this section, we report our experiments and experimental results of the implemented algorithms.

For experimentation, we consider three versions of TENP:
\begin{itemize}
\item \textbf{TENP}: We compute the placement of sensor nodes wrt. both of the task utility constraint and charging constraint. Note that in this experimental settings, we consider the full version of TENP.
\item \textbf{TSP}:  TSP is a relaxed version of TENP, where only the task utility requirement constraint is considered. That is, in TSP, we disregard the charging requirement constraint.
\item \textbf{ESP}: ESP is a relaxed version of TENP, where only the charging requirement constraint is considered. That is, in ESP, we disregard the task utility requirement constraint.
\end{itemize}
The purpose of considering TSP and ESP is this: we want to compare the \textit{average task utility} and \textit{average harvested charge} for full version of TENP and these relaxed versions of TENP (Reported in Table 3).

\subsubsection{Experiment with TENP}
With this full version of TENP, we perform two sets of experiments under the following two experimental settings. Given a TENP problem $P$,
\begin{itemize}
\item [(I)] We vary charging requirement ($\lambda$) with fixed task utility requirement ($u$), until $P$ can be satisfied with $(\lambda,u)$.
\item [(II)] We vary task utility requirement ($u$) with fixed charging requirement ($\lambda$), until $P$ can be satisfied with $(\lambda,u)$.
\end{itemize} 

For a given pair of values of $(\lambda,u)$, first we determine the positions of the sensor nodes in $\varepsilon$ by using the \textit{distanceMinimization} script (implementation of Algorithm 1) and then use the determined positions in \textit{position} to simulate charging and network operations using the \textit{simulation} script (implementation of Algorithm 4).

\paragraph{TENP with Setting (I)}Figure 2 and Figure 3 respectively shows the change of \textit{average harvested charge} and \textit{average task utility} as the value of $\lambda$ (charging requirement) increases from 1 to 17 (with $\lambda=18$, the problem becomes unsatisfiable). We assume fixed values for task utility requirement, with $u=[0.2,0.2,0.2,0.2]$.
\begin{itemize}
\item We have two observations: As we increase $\lambda$, (a) average harvested charge never decreases (Figure 2) and  (b) average task utility never increases (Figure 3). This is an expected result, with the increase of charging demand,  the average harvested charge increases (and the average task utility decreases as task utility requirement remains constant).

\begin{figure}[H]
   \centering
   \begin{minipage}{0.8\linewidth}
   		\label{1CPUTime}
       \centering
       \includegraphics[width=1\linewidth]{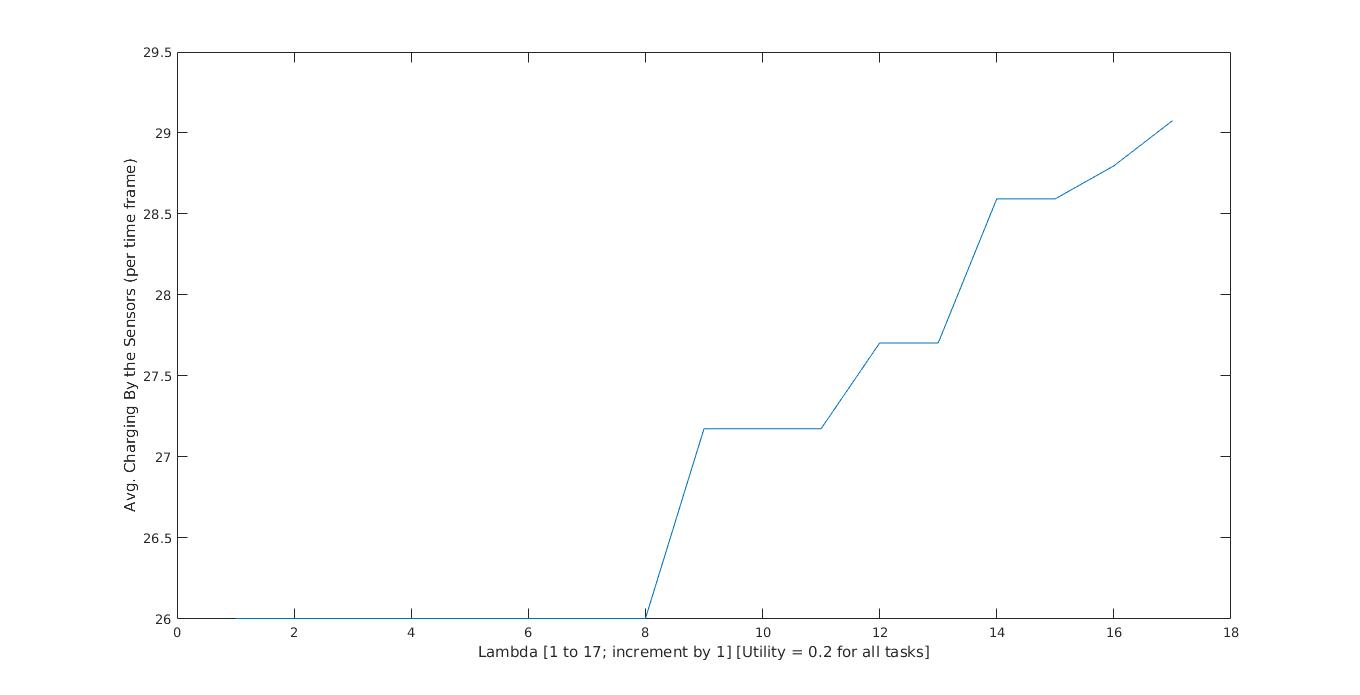} % first figure itself
       \caption{Average Harvested Charge for TENP with varying $\lambda$ and fixed $u$}
   \end{minipage}
   \begin{minipage}{0.8\linewidth}
       \centering
       \label{1ratio}
       \includegraphics[width=1\linewidth]{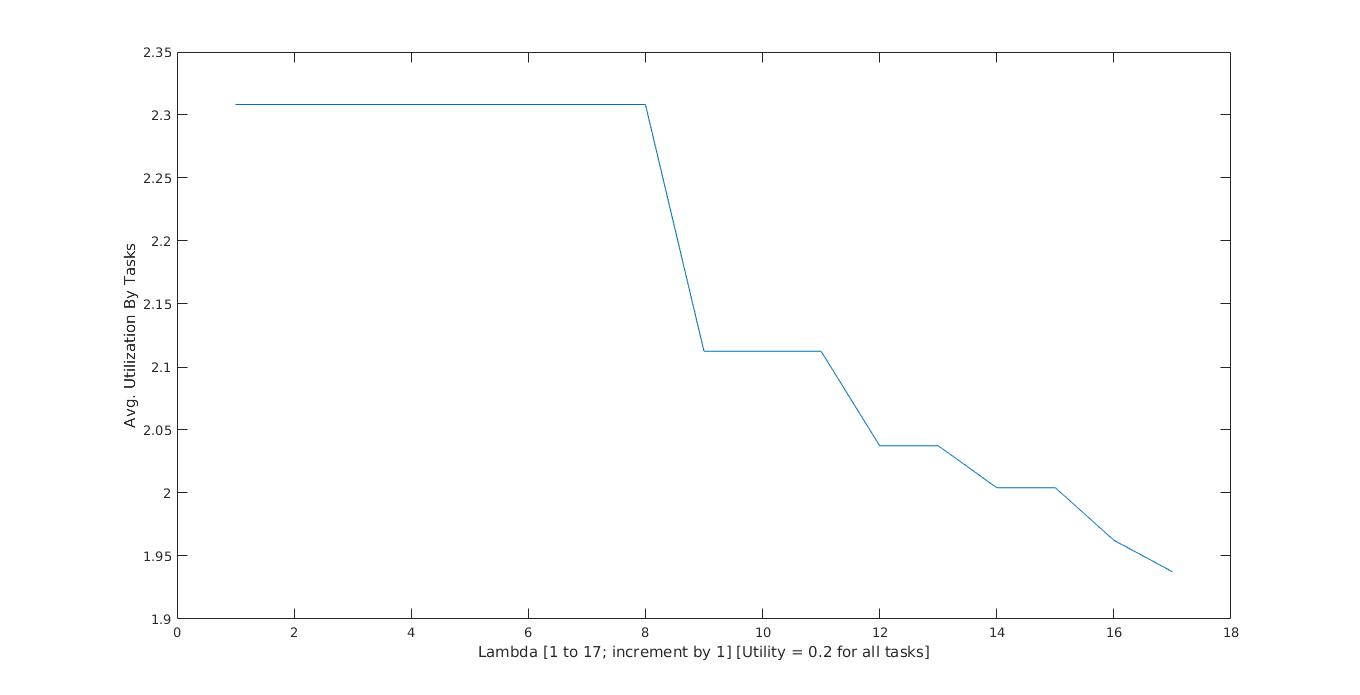} % second figure itself
       \caption{Average Task Utility for TENP with varying $\lambda$ and fixed $u$}
   \end{minipage}
\end{figure}
\item We notice the presence of symmetry between the average harvested charge and average task utility in Figure 2 and 3. Without loss of generality, for TENP, average harvested charge and average task utility appears to be \textit{almost inversely proportional} to each other, wrt. $\lambda$. 

\item In both Figure 2 and Figure 3, we observe the \textit{stalling phenomenon}. The average harvested charge (resp. average task utility) does not always increase (resp. decrease) with the increase of $\lambda$, instead for some consecutive values of $\lambda$ (for example, $\lambda$ values from 1 to 8), both of average harvested charge and average task utility remains unchanged. The explanation of this stalling phenomenon is this: Let, $\lambda^{'}$ and $\lambda^{''}$ are two values of charging requirement within a stalled region, with $\lambda^{'} < \lambda^{''}$. For this particular scenario, for both $\lambda^{'}$  and $\lambda^{''}$, the problem can be satisfied with the sensor nodes placed in the same positions. As positions of the sensor nodes do not change with the increase of $\lambda$ value within a stalled region, the average harvested charging and average task utility do not change.
\end{itemize}
\paragraph{TENP with Setting (II)}
Figure 4 and Figure 5 respectively shows the change of \textit{average harvested charge} and \textit{average task utility} as task utilities in $u$ are increased from \textit{(0.1,0.1,0.1,0.1)} to\textit{ (0.25,0.25,0.25,.025)}, where in each increment, we increase the values in $u$ by 0.01. (with $u=(0.26,0.26,0.26,.026)$, the problem becomes unsatisfiable). We assume a fixed value for $\lambda$ (=10).
\begin{itemize}
\item Like the setting (I), we have two observations: as we increase $u$, (a) average harvested charge \textit{almost always} decreases (Figure 4) and  (b) average task utility \textit{almost always} increases (Figure 5). This is an expected result, as increasing task utility demands, increases the average task utility (and decreases average harvested charging, with fixed charging requirement of the sensors).
\item In Figure 4 and Figure 5, we also notice the presence of symmetry between the average harvested charge and average task utility. Without loss of generality, for TENP, average harvested charging and average charge utility appears to be \textit{almost inversely proportional} to each other, wrt. the task utilization requirement $u$. 
\item Like setting (I), for both Figure 4 and Figure 5, we observe the \textit{stalling phenomenon}. The average task utility  (resp. average harvested charge) does not always increase (resp. decrease), with the increase of $u$, instead for some consecutive values of $u$ (for example, for $u$ values from 0.17 to 0.20), both of average harvested and average task utility remains unchanged. The explanation of this stalling phenomenon is same as what it is for setting (I): Let, $u^{'}$ and $u^{''}$ are two values of charging requirement within a stalled region, with $u^{'}_i < u^{''}_i$, for $1 \le i \le 4$. For this particular scenario, for both $u^{'}$  and $u^{''}$, the problem can be satisfied with the sensor nodes placed in the same positions. As position of the sensor nodes do not change with the increase of values in $u$, the average task utility and average harvested charge do not change.
\end{itemize}

\begin{figure} [H]
   \centering
   \begin{minipage}{0.8\linewidth}
       \centering
       \label{2CPUTime}
       \includegraphics[width=1\linewidth]{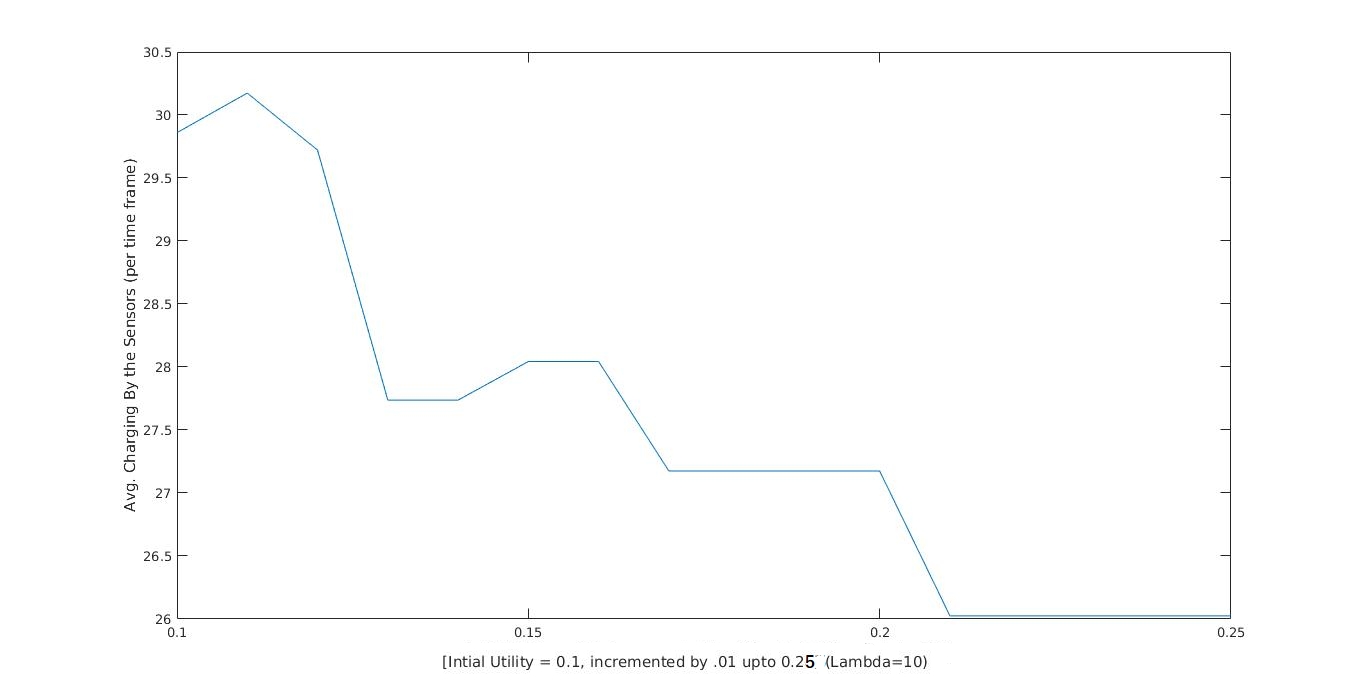} % first figure itself
       \caption{Average Harvested Charge for TENP with varying $u$ and fixed$\;$ $\lambda$}
   \end{minipage}
   \hfill
   \begin{minipage}{0.8\linewidth}
       \centering
       \label{2ratio}
       \includegraphics[width=1\linewidth]{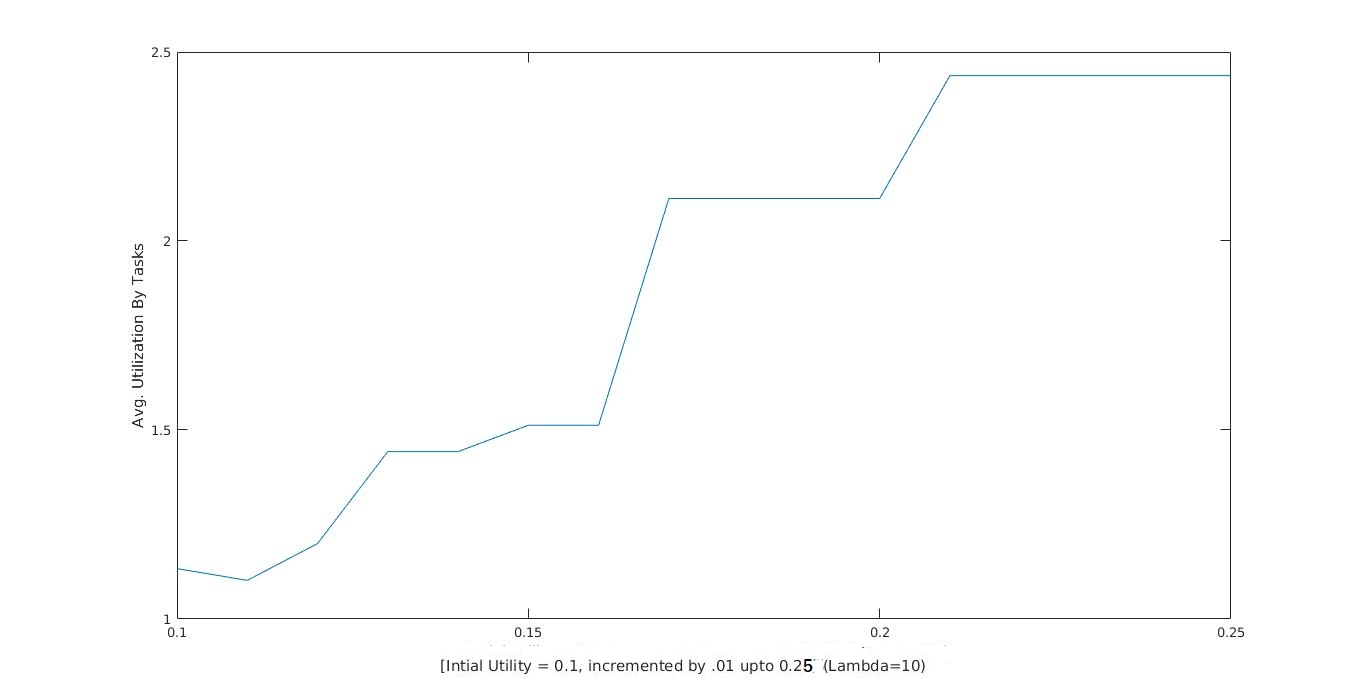} % second figure itself
       \caption{Average Task Utility for TENP with varying $u$ and fixed $\lambda$}
   \end{minipage}
\end{figure}

\subsubsection{Experiment with ESP and TSP}
As task utility requirement constraint is disregarded in ESP, for this version we have only one setting, where we increase charging requirement $\lambda$. for the ESP problem, Figure 6 and 7 shows how average harvested charge and average task utility changes with the increase of the values of $\lambda$.
\begin{itemize}
\item With the increase of $\lambda$, average harvested charge never decreases (Figure 6).
\item However, the average task utility metric for ESP does not exhibit any clear trend (Figure 7). With relatively smaller values of $\lambda$ (for example values from from 7 to 18), first the average task utility decreases as expected, then increases again for larger values of $\lambda$ (for example values from 21 to 33). One possible explanation of this phenomenon can be this:  For larger values of $\lambda$, the sensor nodes are placed closer to ETs. \textit{Incidentally}, these fixed sensor locations are also closer to locations of tasks of these associated sensor. As a result average task utility increases. We \textit{conjecture} that this particular phenomenon is a result of the \textit{specific tasks and ET placement} for the environment setting shown in Table 2.
\end{itemize}
\begin{figure}[H]
   \centering
   \begin{minipage}{0.8\linewidth}
       \centering
               \label{3CPUTime}
       \includegraphics[width=1\linewidth]{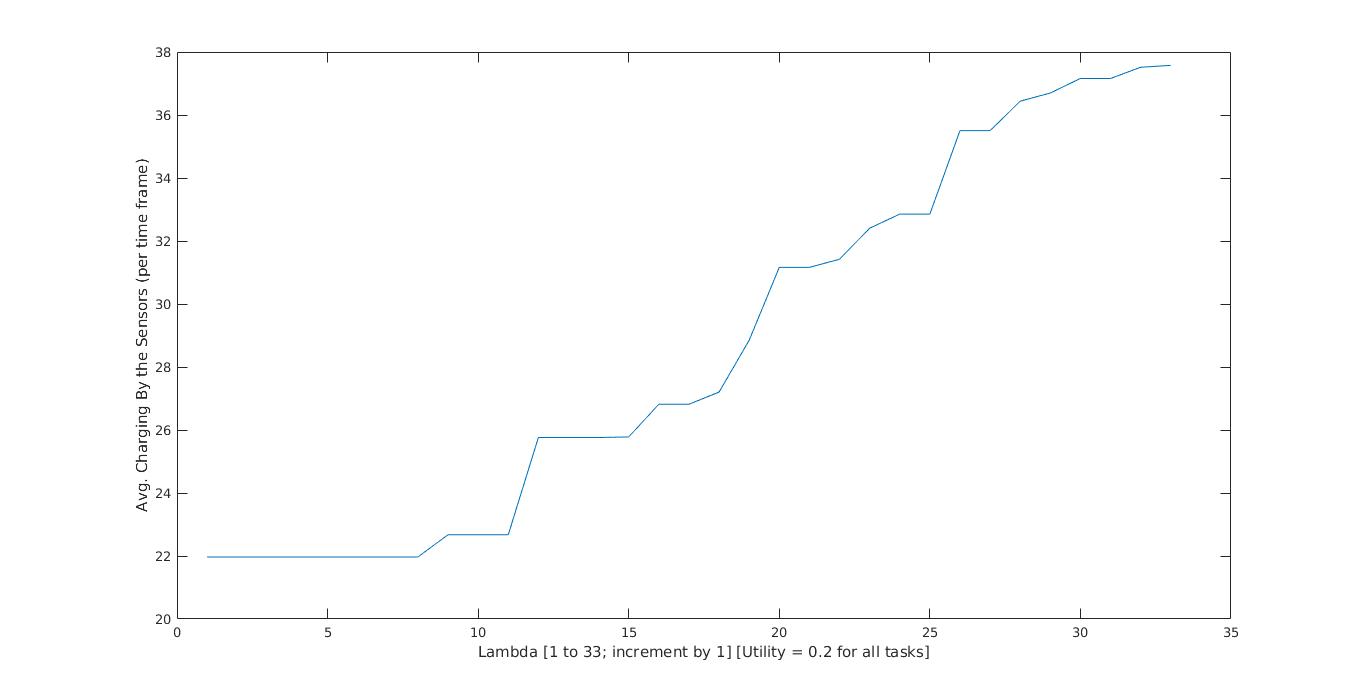} % first figure itself
       \caption{Average Harvested Charge for ESP with varying $\lambda$}
   \end{minipage}
   \hfill
   \begin{minipage}{0.8\linewidth}
       \label{3ratio}
       \centering
       \includegraphics[width=1\linewidth]{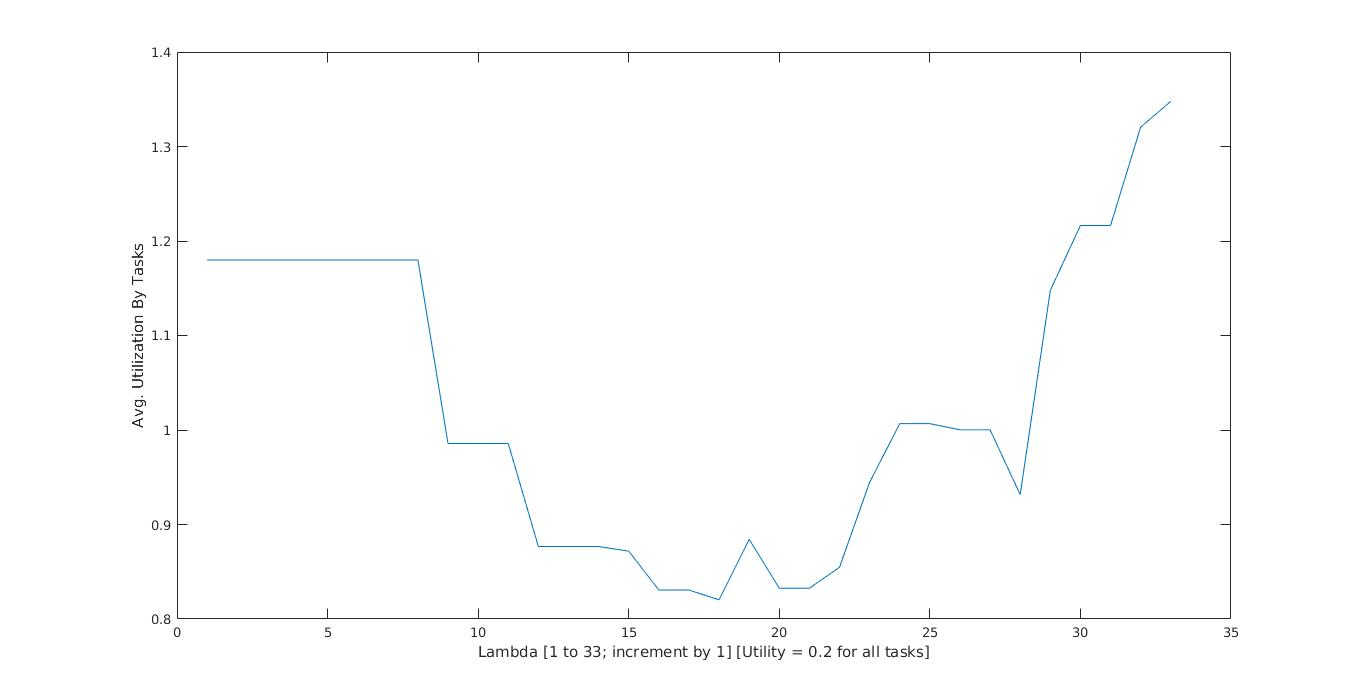} % second figure itself
       \caption{Average Task Utility for ESP with varying $\lambda$}
   \end{minipage}
\end{figure}
For TSP, as we disregard the charging requirement constraint ($\lambda$), we consider only one experimental setting, where we increase task utility $u$ from $(0.1,0.1,0.1,0.1)$ to $(0.24,0.24,0.24,0.24)$. Figure 8 and 9 shows average harvested charging and average task utility  with varying $u$ values.
\begin{figure} [H]
   \centering
   \begin{minipage}{0.8\linewidth}
       \centering
               \label{3CPUTime}
       \includegraphics[width=1\linewidth]{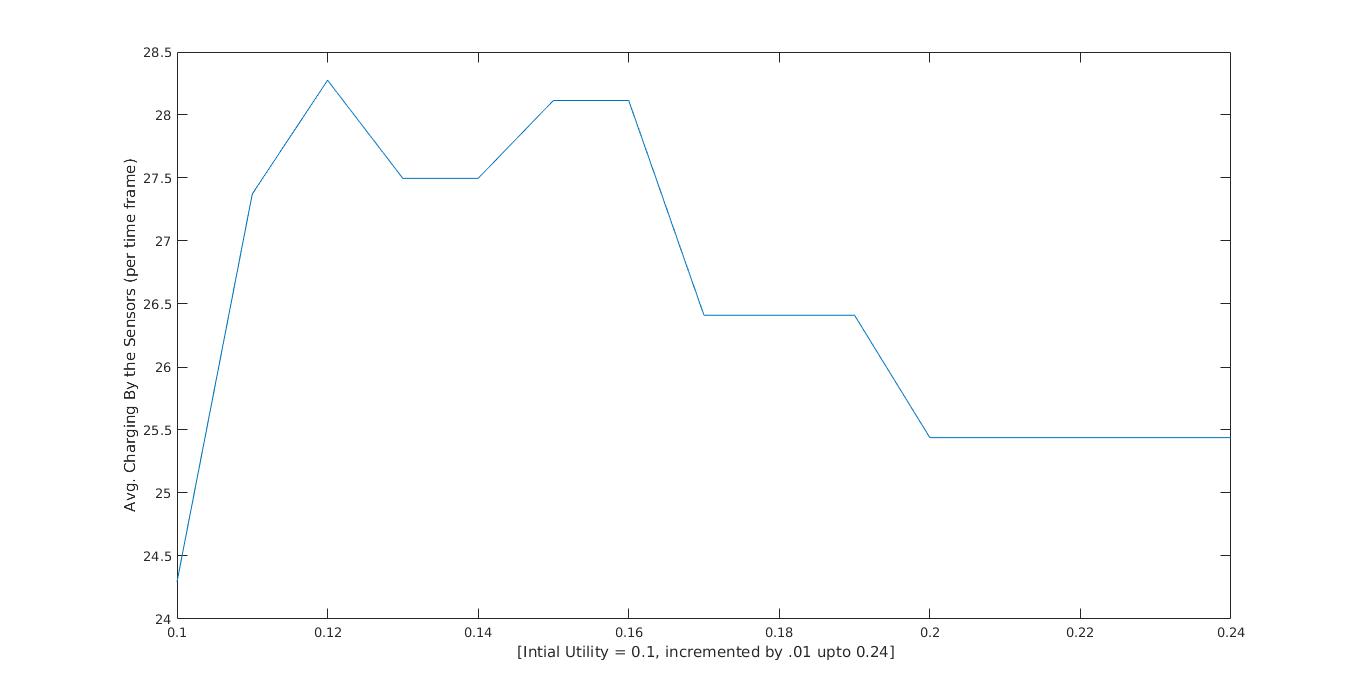} % first figure itself
       \caption{Average Harvested Charge for TSP with varying $u$}
   \end{minipage}
   \hfill
   \begin{minipage}{0.8\linewidth}
       \label{ratio}
       \centering
       \includegraphics[width=1\linewidth]{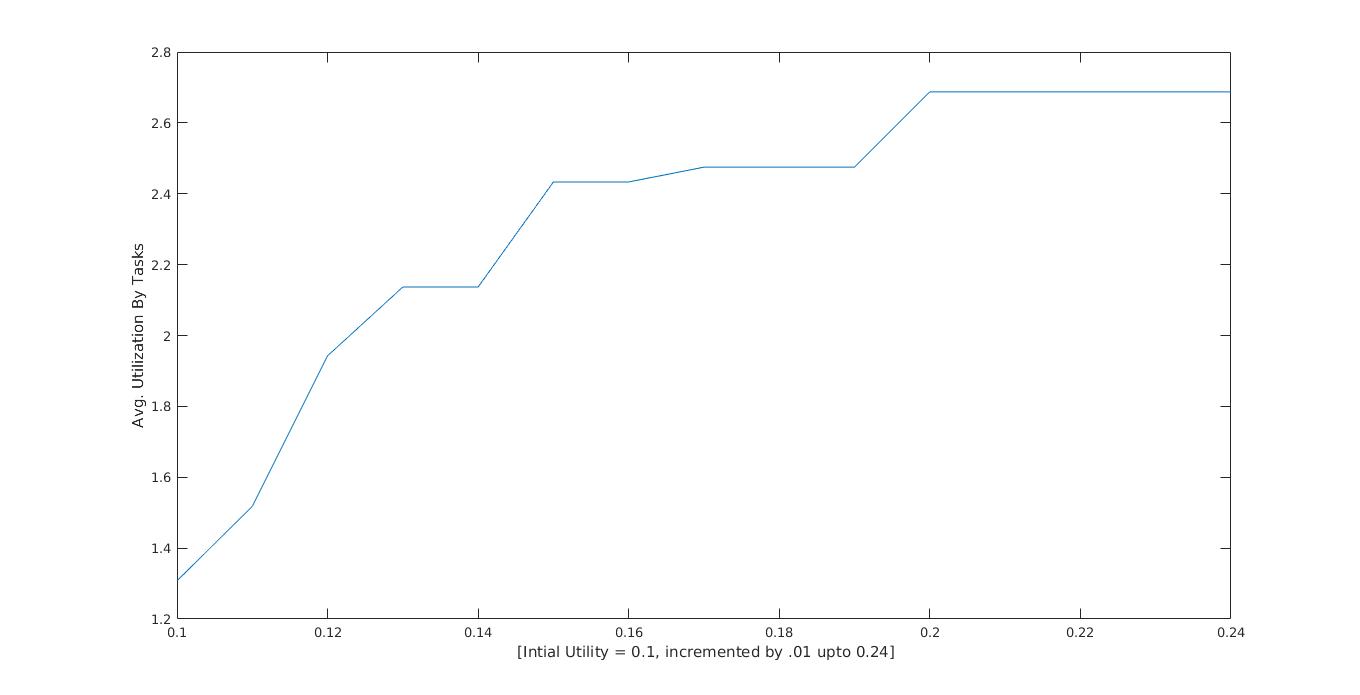} % second figure itself
       \caption{Average Task Utility for TSP with varying $u$}
   \end{minipage}
\end{figure}
\begin{itemize}
\item In Figure 8, for TSP, we observe that with the increase of task utility requirement ($u$), average harvested charge generally decreases, with a sharp increase for lower values of $u$ (0.1 to 0.12). This sharp increase in average harvested charge is counter intuitive. One possible explanation of this phenomenon is this: when task utility requirements ($u$) are low, sensor nodes can be placed far apart from the tasks. This let sensor nodes to be placed near the ETs and as a result the average harvested charge  increases. But, as values in $u$ increases, the sensor nodes needs to be placed nearer to the tasks and move far apart from the ETs. As a result, average harvested charging decreases, as $u$ values increases.
\item For TSP,  average task utility never decreases  with the increase of utility requirement of tasks (Figure 9). This is expected, as we increase the utility demand for tasks, the nodes are placed nearer to the tasks and as a result average task utility increases.
\end{itemize}

\begin{table}[]
\centering
\caption{Comparison of TENP, ESP and TSP}
\label{my-label}
\begin{tabular}{|l|l|l|}
\hline
\textbf{Version} & \textbf{\begin{tabular}[c]{@{}l@{}}Maximum Average \\ Task Utility\end{tabular}} & \textbf{\begin{tabular}[c]{@{}l@{}}Maximum Average \\ Charging Harvest (Watts)\end{tabular}} \\ \hline
TENP             & 2.43 (in Setting II)                                                                & 29.57  (in Setting I)                                                                           \\ \hline
ESP              & 1.34                                                                             & \textbf{38.57}                                                                               \\ \hline
TSP              & \textbf{2.68}                                                                    & 25.43                                                                                        \\ \hline
\end{tabular}
\end{table}

Table 3 shows the \textit{maximum average harvested charge} and \textit{maximum average task utility} for the three versions of TENP. While simulation with ESP yields the \textbf{highest} \textit{maximum average harvested charge} (38.57 Watts), \textbf{highest} \textit{maximum average task utility} (2.68) is achieved by TSP. Note that with task utility requirement being disregarded in ESP, it has achieved \textbf{lowest} \textit{maximum average task utility} (1.34) in comparison to the other two versions. Similarly, for TSP, with charging requirement being disregarded, it yields \textbf{lowest} \textit{maximum average harvested charge} (25.43 Watts) in comparison to the other two versions. TENP, on the other hand, exhibits a good \textit{balance} between average harvested charge (29.57 Watts) and average task utility (2.43). The results shown in Table 3 are really intuitive.
\section{Future Work}

In the future, we plan to devise a \textit{complete method} for TENP. Once it fails to assign a position for a sensor node, the complete method will need to consider reassigning previously assigned positions via backtracking. This warrants a major revision of Algorithm 1. The complete algorithm is most likely to have an exponential time complexity. Thus, designing good heuristics for the complete algorithm will be another interesting future direction for this work. Devising Satisfiability (SAT)/Mixed Integer Programming (MIP) encoding for the TENP problem seems to be another interesting and natural direction. Specially, with the realization of SAT encoding of TENP, the NP-completeness of TENP will be proven. Additionally, it will be interesting to see how the complete algorithm will perform in comparison to highly efficient modern SAT/MIP solvers on the TENP problem.

\section*{Acknowledgement}
This work was done for a graduate course project at the Computing Science department of the University of Alberta. We thank the instructor Dr. Janelle Harms for her feed-back and suggestions during this course project.

\end{document}